\newcolumntype{C}[1]{>{\centering\let\newline\\\arraybackslash\hspace{0pt}}m{#1}}
\theoremstyle{definition}
\newtheorem{proposition}{Proposition}[section]
\tikzstyle{block} = [draw, fill=blue!20, rectangle, 
\tikzstyle{sum} = [draw, fill=blue!20, circle, node distance=1cm]
\tikzstyle{input} = [coordinate]
\tikzstyle{output} = [coordinate]
\tikzstyle{pinstyle} = [pin edge={to-,thin,black}]
\tikzstyle{bigblock} = [draw, fill=blue!20, rectangle, 
\tikzstyle{medblock} = [draw, fill=blue!20, rectangle, 
\tikzstyle{mux} = [draw, fill=black!20, rectangle, 
\tikzstyle{smallblock} = [draw, fill=blue!20, rectangle, 
\tikzstyle{sum} = [draw, fill=blue!20, circle, node distance=1cm]
\tikzstyle{signal} = [coordinate]
\tikzstyle{pinstyle} = [pin edge={to-,thin,black}]
\tikzstyle{block} = [draw, fill=blue!20, rectangle, 
\tikzstyle{blockS} = [draw, fill=blue!20, rectangle, 
\tikzstyle{input} = [coordinate]
\tikzstyle{output} = [coordinate]
\newcommand{\bc}{\begin{center}}
\newcommand{\ec}{\end{center}}
\newcommand{\benum}{\begin{enumerate}}
\newcommand{\eenum}{\end{enumerate}}
\newcommand{\nn}{\nonumber}
\newcommand{\matl}{\left[ \begin{array}}
\newcommand{\matr}{\end{array} \right]}
\newcommand{\matls}{\left[ \begin{smallmatrix}}
\newcommand{\matrs}{\end{smallmatrix} \right]}
\newcommand{\isdef}{\stackrel{\triangle}{=}}
\newcommand{\rmA}{{\rm A}}
\newcommand{\rmD}{{\rm D}}
\newcommand{\rmT}{{\rm T}}
\newcommand{\rmc}{{\rm c}}
\newcommand{\rmd}{{\rm d}}
\newcommand{\rmf}{{\rm f}}
\newcommand{\rmi}{{\rm i}}
\newcommand{\rmp}{{\rm p}}
\newcommand{\BBR}{{\mathbb R}}
\newcommand{\shiftq}{{\textbf{\textrm{q}}}}
\newcommand{\twolineentry}[2]{
\begin{tabular}{ c }
    #1 \\
    #2
\end{tabular}
}
\newcommand{\threelineentry}[3]{
\begin{tabular}{ c }
    #1 \\
    #2 \\
    #3
\end{tabular}
}
\title{An A Quadcopter Autopilot Based on an Adaptive Digital PID Controller}
\title{Adaptive Digital PID Control of a Quadcopter}
\title{Retrospective-Cost-Based Adaptive Digital PID Control of a Quadcopter}
\title{A Retrospective-Cost-Based Adaptive Digital PID   Quadcopter Autopilot}
\title{Adaptive Digital PID Control of a Quadcopter with Unknown Dynamics}
\title{One-Shot Learning for a Quadcopter Autopilot}
\title{An adaptive digital autopilot for Multicopters}
\title{\LARGE \bf Experimental Implementation of an Adaptive Digital Autopilot }
\title{\LARGE \bf An Adaptive PID Autotuner for Multicopters \\ with Experimental Results }
\author{
    John Spencer,
    Joonghyun Lee,
    Juan Augusto Paredes, 
    Ankit Goel, 
    Dennis Bernstein%
\thanks{This research was supported in part by the Office of Naval Research under grant N00014-19-1-2273.}
\thanks{John Spencer, Joonghyun Lee, Juan Augusto Paredes, and Dennis Bernstein are with the Department of Aerospace Engineering, University of Michigan, Ann Arbor, MI 48109.
{\tt\small spjohn, joonghle, jparedes,}
{\tt \small dsbaero@umich.edu}
}
\thanks{Ankit Goel is with the Department of Mechanical Engineering, University of Maryland, Baltimore County, Md 21250.
{\tt \small ankgoel@umbc.edu}
}
}
\date{September 2021}
\begin{document}

\maketitle

\begin{abstract}
    This paper develops an adaptive PID autotuner for multicopters, and presents simulation and experimental results. 
    The autotuner consists of adaptive digital control laws based on retrospective cost adaptive control implemented in the PX4 flight stack.
    A learning trajectory is used to optimize the autopilot during a single flight. 
    %
    The autotuned autopilot is then compared with the default  PX4 autopilot by flying a test trajectory constructed using the second-order Hilbert curve.
    In order to investigate the sensitivity of the autotuner to the quadcopter dynamics, the mass of the quadcopter is varied, and the performance of the autotuned and default autopilot is compared. 
    %
    It is observed that the autotuned autopilot outperforms the default autopilot.
    
\end{abstract}

\section{Introduction}

Unmanned aerial vehicles, especially multicopters, have been used in a myriad of applications over the last decade, 
such as environment mapping, asset monitoring, risk assessment, sports broadcasting, wind-turbine inspection, and their applications continue to grow
\cite{chang2016development,anweiler2017multicopter,andaluz2015nonlinear, schafer2016multicopter,stokkeland2015autonomous,juan2017}.
In its most common form, a multicopter has four propellers. By controlling the spin rates of the four propellers, a force along a body-fixed axis and moments about three linearly independent body-fixed axes can be independently applied to affect desired translational as well as rotational motions.   
However, due to the nonlinear and unstable nature of the quadcopter dynamics, precise control of a quadcopter is a well-recognized challenging problem.

Nonlinear techniques such as feedback-linearization \cite{lee2009feedback} and back-stepping \cite{farrell2005backstepping} have been applied to deal with the nonlinearities and to construct stabilizing controllers. 
However, these techniques require accurate plant models at all operating conditions \cite{castillo2004real}. 
Adaptive techniques have also been investigated to reduce the need for an accurate model \cite{zuo2014adaptive,dydek2013MRAC}, however, they also require a sufficiently accurate plant model to construct stabilizing controllers.
Iterative learning control is used learn the pitch and roll controller in the closed-loop system  in \cite{abdolahi2018black}.
Reinforcement learning control is used to learn low-level controllers in case of multiple actuator failure in \cite{dooraki2020reinforcement}.
L1 adaptive control is used to improve the stability margins of a stable control loop in a quadcopter flight control system in  \cite{thu2016modeling}.
Fuzzy control was used in the position control in  \cite{tran2020fuzzy}.
Bidirectional brain emotional learning was used to improve trajectory tracking and handle payload uncertainties in \cite{kumar2021realtime}. 
Model reference adaptive control was used in the attitude controller in \cite{niit2017integration} to counteract model uncertainties. 
Adaptive twisting sliding mode control was used in the attitude controller in \cite{hoang2017adaptive} to resolve the chattering issues in standard sliding mode control. 
Robust fixed point transformation based adaptive control was used in  \cite{czako2017novel} to improve quadcopter stability in the presence of parameter uncertainties and external disturbances. 
Adaptive particle swarm optimization was used in a PD controller (APSO-PD) in \cite{yazid2018optimal} to improve the rise time, settling time, overshoot, and peak time of a standard PSO controller. 
Robust adaptive control based on backstepping is used in \cite{bhatia2019projection} \cite{kourani2018coping} to provide robust quadcopter altitude and attitude tracking  after payload change. 
Immersion and invariance based adaptive backstepping control is used in \cite{navabi2017immersion} to remedy quadcopter attitude instabilities due to disturbance torques or parameter uncertainties. 
Adaptive fault tolerant control based on the adaptive minimum projection method is used to provide quadcopter stability in the event of actuator failure in \cite{tabata2018adaptive}. 
Balanced control is used in \cite{hu2020research} to improve quadcopter performance by switching between fuzzy adaptive PID and optimized PID midflight. 
L1 adaptive control is used in \cite{zhang2020pitch} to stabilize a fixed-wing pitch controller, and in \cite{li2018lateral} to counteract fixed-wing split drag rudder damage. 
Adaptive control is also used in \cite{yu2018adaptive} to resolve instabilities due to trailing vortices in fixed-wing formation flight. 
Online adaptive model parameter estimation is used in the velocity controller in \cite{gui2018adaptive} to mitigate error due to model parameter uncertainty.
However, most of these control techniques focus on optimizing a part of the control system, 
%
%
assuming that the rest of the control system is sufficiently good. 

Typical quadcopter autopilots are, however, based on cascaded controllers, which consist of an inner loop to stabilize the dynamics, and an outer loop to track position commands. 
Traditionally, all controllers in the autopilot are constructed using manually tuned PID control laws.
In fact, widely used open-source autopilots such as PX4 and ArduPilot contain finely-tuned PID control laws for many commercially available multicopter configurations \cite{meier2015px4,ardupilot}. 
These autopilots cannot guarantee stability and thus have a fixed operational envelope, which is usually unknown. 
Moreover, autopilots tuned for a specific geometry and inertia properties do not perform well in the case where these properties vary with time, such as, in the case of unknown suspended payload, hardware alteration, and dynamic environmental changes. 
Such variations invariably degrade the performance of the autopilot.


With these motivations in mind, this paper develops an adaptive autotuner for multicopters with unknown dynamics. 
In particular, all of the fixed-gain control laws of an autopilot are replaced by adaptive control laws that optimize the gain by flying a single learning trajectory. 
Specifically, the adaptive controllers are updated by the retrospective cost adaptive control (RCAC) algorithm \cite{rezaPID}.
The learning trajectory is designed such that all possible motions of the quadcopter are excited and thus nonzero control outputs are required from all control laws, which ensures that all of the controller gains are adaptively updated.

The contribution of the work presented in this paper is 
the development of an adaptive autotuner that does not require any prior knowledge of the system dynamics, and instead uses a simple learning trajectory to adapt the autopilot gains; and its numerical and experimental demonstration.  
%
%
The performance of the autotuner is demonstrated by flying a test trajectory and comparing its performance with a finely-tuned autopilot. 
The improvements due to the autotuning process are demonstrated through simulations and flight tests.

The paper is organized as follows.
Section \ref{sec:PX4_autopilot} briefly presents the control system architecture implemented in the PX4 autopilot.
%
%
Section \ref{sec:autotuner} explains the setup of the adaptive autotuner.
Section \ref{sec:PID_Algo} describes the RCAC algorithm used to tune the control parameters.
%
%
Section \ref{sec:flight_tests} presents simulation and flight test results showing the adaptation of the autotuner and comparing the performance of the PX4 autopilot using the stock and autotuned control parameters.
%
%
Finally, section \ref{sec:conclusions} concludes the paper with a summary and future research directions.

\section{ Quadcopter Autopilot}
\label{sec:PX4_autopilot}
This section reviews the quadcopter autopilot considered in this work. 
This autopilot is based on the control architecture implemented in the PX4 autopilot.
The notation used in this paper is described in more detail in \cite{goel_adaptive_pid_2021}.

The control system consists of a mission planner and two nested loops as shown in Figure \ref{fig:PX4_autopilot_nested_loop}.
%
%
The mission planner generates the position, velocity, and azimuth setpoints from the user-defined waypoints using the guidance law described below.
The outer loop consists of the \textit{position controller}, whose inputs are the position and velocity errors, defined as the difference between the setpoints and the measurements. 
%
%
The output of the position controller is the thrust vector setpoint.
Note that the thrust vector is expressed in terms of the Earth-fixed frame. 
The inner loop consists of the \textit{attitude controller}, whose inputs are the thrust vector setpoint, the azimuth and azimuth-rate setpoints, as well as the attitude and the angular rate measured in the body-fixed frame. 
The output of the attitude controller is the moment vector setpoint in the body-fixed frame. 
The magnitude of the thrust vector setpoint and the moment vector setpoint uniquely determine the required rotation rates of the four propellers. 

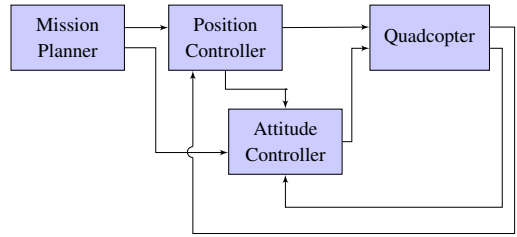
\begin{figure}[h]
    \centering
    \resizebox{0.8\columnwidth}{!}{
    \begin{tikzpicture}[auto, node distance=2cm,>=latex',text centered]
    
        \node [smallblock, minimum height=3em, text width=1.6cm] (Mission) {\small Mission Planner};
        \node [smallblock, minimum height=3em, right = 2em of Mission, text width=1.6cm] (Pos_Cont) {\small Position Controller};
        \node [smallblock, minimum height=3em, below right = 1.75em and -2.5 em of Pos_Cont,text width=1.6cm] (Att_Cont) {\small Attitude Controller};
        \node [smallblock, minimum height=3em, minimum width = 5.5em,  right = 4em of Pos_Cont] (Quadcopter) {\small Quadcopter};
        
        \draw [->] (Mission.10) -- 
        (Pos_Cont.170);
        \draw[->] (Mission.350) -- +(.5,0) |- 
        ([yshift = -0.5em]Att_Cont.180);

        \draw [->] (Pos_Cont.-90) |- +(1,-.3 ) -|
        (Att_Cont.90);

        \draw [->] (Pos_Cont.10) -- 
        (Quadcopter.170);
        \draw [->] (Att_Cont.0) -- +(0.15,0) |- 
        (Quadcopter.190);
        \draw [-] (Quadcopter.10) -- +(.4,0) |- 
        ([xshift = -1.5em, yshift = -7.5 em]Pos_Cont.south) -- ([xshift = -1.5em, yshift = -4 em]Pos_Cont.south);
        \draw ([xshift = -1.5em, yshift = -4 em]Pos_Cont.south) arc (270:90:0.25em);
        \draw [->] ([xshift = -1.5em, yshift = -3.5 em]Pos_Cont.south) -- ([xshift = -1.5em]Pos_Cont.south);
        \draw [->] (Quadcopter.350) -- +(0.2,0) |- 
        ([yshift = -1.5 em]Att_Cont.south) -- (Att_Cont.south);
        
    \end{tikzpicture}
    }
    \caption{\footnotesize Control-system architecture. }
    \label{fig:PX4_autopilot_nested_loop}
\end{figure}

The mission planner uses a guidance law described in Appendix \ref{app:guid_law} to generate the position and the azimuth setpoints. 
Using a user-specified maximum velocity $v_{\rm max}$ and maximum acceleration $a_{\rm max}$, the guidance law generates a trajectory that consists of a constant acceleration phase, a cruise phase, and a constant deceleration phase.
A similar guidance law is used to generate azimuth setpoints, given a user-specified maximum angular velocity $\dot{\psi}_{\rm max}$ and maximum angular acceleration $\ddot{\psi}_{\rm max}.$

The position controller consists of two cascaded linear controllers as shown in Figure \ref{fig:PX4_autopilot_outer_loop}.
The first controller $G_r$ consists of three proportional controllers.
The second controller $G_v$ consists of three decoupled PID controllers and a velocity setpoint feedforward controller, and yields the thrust vector setpoint.

\begin{figure}[h]
    \centering
    \resizebox{0.95\columnwidth}{!}{
    \begin{tikzpicture}[auto, node distance=2cm,>=latex']
        \node (Ref_traj) {};
    	\node [sum, right = 2 em of Ref_traj] (sum1) {};
    	\node[draw = white] at (sum1.center) {$+$};
    	\node [smallblock, minimum width = 2em, minimum height = 1.75 em, right = 2 em of sum1] (Cont1) {\small$G_r$};
    	\node [sum, right = 2 em of Cont1] (sum3) {};
    	\node[draw = white] at (sum3.center) {$+$};
    	\node [smallblock, minimum width = 2em, minimum height = 1.75 em, right =2 em of sum3] (Cont2) {\small$G_v$};
    	\node [right = 2 em of Cont2] (output) {};

    	\draw [->] (Ref_traj) node [above, xshift=0.25em]
    	{\scriptsize  \twolineentry{Position}{setpoint}	}-- (sum1);
    	
    	\draw [->] ([yshift = -2em]sum1.south) -- node[xshift = .7em, yshift = -0.75em]{\scriptsize \twolineentry{Position}{measurement}}(sum1.south);
    	
    	\draw [->] (sum1) -- node [xshift=-0.7em, yshift = -1.5em]{$-$} (Cont1);
    	\draw [->] ([yshift = 2.5em]sum3.north) -- node [xshift = -4.0em, yshift = 1.em]{\scriptsize \twolineentry{Velocity}{setpoint}}(sum3.north);
    	\draw [->] (Cont1) -- (sum3);
    	\draw [->] (sum3) -- node [xshift=-0.7em, yshift = -1.42em]{$-$} (Cont2);
    	\draw [->] (Cont2) -- node [above,xshift = 0.5em, yshift = -0.1em] {\scriptsize \threelineentry{Thrust}{vector}{setpoint}} (output);
    	
    	\draw [->] ([yshift = -2em]sum3.south) -- node[xshift = 0.7em, yshift = -0.75em]{\scriptsize \twolineentry{Velocity}{measurement}}(sum3.south);

    \end{tikzpicture}
    }
    \vspace{-1em}\caption{\footnotesize Position controller architecture. }
    \label{fig:PX4_autopilot_outer_loop}
\end{figure}
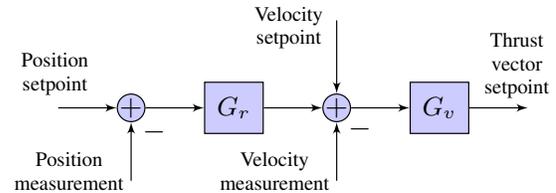

The force vector setpoint along with the azimuth setpoint are used to calculate the attitude setpoint, which is represented as a quaternion.  
The attitude controller consists of two cascaded controllers $G_q$ and $G_\omega$ as shown in Figure \ref{fig:PX4_autopilot_inner_loop}. 
%
The first controller $G_q$ is an almost globally stabilizing controller  \cite{Chaturvedi2011} that consists of three proportional gains.
The second controller $G_\omega$ consists of three PID controllers and an angular rate setpoint feedforward controller, and yields the moment vector setpoint.

\begin{figure}[h]
    \centering
    \resizebox{0.95\columnwidth}{!}{
    \begin{tikzpicture}[auto, node distance=2cm,>=latex']
    	
    	\node (Attitude) {};
    	\node [sum, right = 2. em of Attitude] (sum1) {};
    	\node[draw = white] at (sum1.center) {$+$};
    	\node [smallblock, right = 2 em of sum1,  minimum width = 2.5em, minimum height = 2 em] (Cont1) {\small$G_q$};
    	\node [sum, right = 2 em of Cont1] (sum2) {};
    	\node[draw = white] at (sum2.center) {$+$};
    	\node [smallblock, right = 2 em of sum2,  minimum width = 2.5em, minimum height = 1.75 em] (Cont2) {\small$G_\omega$};
    	\node [right = 1.25 em of Cont2] (output) {};
    	
    	\draw[->] (Attitude.east) -- (sum1.west) node [xshift=-1.75em, yshift = 1.2em] {\scriptsize \twolineentry{Attitude}{setpoint}};
    	\draw [->] ([yshift = -1.2em]sum1.south) -- node [xshift=2.8 em, yshift = -1.4 em] {\scriptsize \twolineentry{Attitude}{measurement}} node [xshift=1.4em, yshift = 0.2em]{$-$} (sum1.south);
    	\draw [->] (sum1.east) -- (Cont1.west);
    	\draw [->] (Cont1.east) -- (sum2.west);
    	\draw [->] ([yshift = -1.2em]sum2.south) -- node [xshift=2.9 em, yshift = -1.4 em] {\scriptsize \twolineentry{Angular rate}{measurement}} node [xshift=1.4em, yshift = 0.2em]{$-$} (sum2.south);
    	\draw [->] (sum2.east) -- (Cont2.west);
    	\draw [->] (Cont1.east) -- ([xshift=1em]Cont1.east) |- node [xshift=1.8 em, yshift = -0.25em] {\scriptsize\twolineentry{Angular rate}{setpoint}} ([yshift = 0.75em]Cont2.north) -- (Cont2.north);
    	\draw [->] (Cont2.east) -- (output.center) node [xshift=0.25 em, yshift = 1.5 em] {\scriptsize\threelineentry{Moment}{vector}{setpoint}};
    \end{tikzpicture}
    }
    \vspace{-1em}\caption{\footnotesize Attitude controller architecture.}
    \label{fig:PX4_autopilot_inner_loop}
\end{figure}
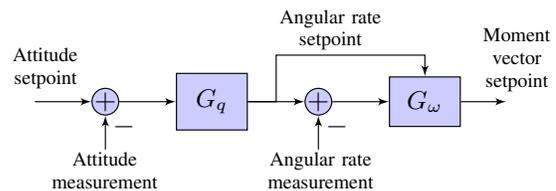

The control system implemented in the PX4 autopilot thus consists of 27 gains.
In particular, the outer loop includes three gains in $G_r$ and nine gains in $G_v$; and
the inner loop includes three gains in $G_q$ and 12 gains in $G_\omega.$
In standard practice, these 27 gains are manually tuned and require considerable expertise. 



\section{Adaptive Autotuner}
\label{sec:autotuner}
The adaptive autotuner is constructed by replacing the fixed-gain controllers in the autopilot described in the previous section with adaptive controllers that are updated by retrospective cost optimization. 
In particular, each fixed-gain controller described in the previous section is replaced by an adaptive controller parameterized with the same structure.
%
Specifically, denoting the $i$th component of the input and the output of $G_r$ at step $k$ by $z_{r,i,k}$ and $u_{r,i,k},$ the velocity setpoint 
\begin{align}
    u_{r,i,k}
        =
            G_r(k,\shiftq)
            g(z_{r,i,k}),
\end{align}
where $G_r(k,\shiftq) = \text{diag}(\theta_{r,1,k}, \theta_{r,2,k}, \theta_{r,3,k}),$ 
$g(z)$ is a error-normalization function, 
and the gains $\theta_{r,i,k}$ are updated by the RCAC algorithm described in the next section. 
The functions $g(z)$ used in this work are given in Table \ref{tab:RCAC_Hyperparameters}.
Similarly, 
\begin{align}
    u_{v,i,k}
        =
            G_v(k,\shiftq)
            g(z_{v,i,k}),
\end{align}
where, for $i\in \{1,2,3 \}$, the entry $(i,i)$ 
\begin{align}
    G_{v,i,i}(k,\shiftq) 
        =
            \frac{\theta_{v,i+1}}{\shiftq} +
            \frac{\theta_{v,i+2}}{\shiftq-1} +
             \frac{\theta_{v,i+3}(\shiftq-1)}{\shiftq^2} .
\end{align}
The controllers $G_q(k,\shiftq)$ and $G_{\omega}(k,\shiftq)$ are similarly parameterized by the gains $\theta_q \in \BBR^3$ and $\theta_\omega \in \BBR^{12}.$

The adaptive gains $\theta_r, \theta_v, \theta_q,$ and $\theta_\omega$ are updated in a \textit{learning} trajectory, which consists of flying through the waypoints described in Table \ref{tab:learning_trajectory},
parameterized by $z_{\rm hov}, x_{\rm inc}, y_{\rm inc},$ and $z_{\rm inc}$.
%
%
%
%
Note that the waypoints are denoted by coordinates $(x,y,z,\psi),$ which correspond to the three components of the position vector and azimuth of the multicopter in the East($x$)-North($y$)-Up($z$) (ENU) coordinate frame.
%
%

\begin{table}[h]
    \caption{Waypoints in the learning trajectory}
    \label{tab:learning_trajectory}
    \centering
    \renewcommand{\arraystretch}{1.2}
    \begin{tabular}{|c|l|l|}
        \hline
        \multicolumn{1}{|c|}{\textbf{Waypoint}} &  \multicolumn{1}{|c|}{\textbf{Coordinate}}  &  \multicolumn{1}{|c|}{\textbf{Remark}} \\ \hhline{|=|=|=|}
        1 & $(0,0,z_{\rm hov},0)$        &   Take-off  \\ \hline
        2 & $(0,0,z_{\rm hov} + z_{\rm inc},0)$        &   Move along $+z$ direction \\ \hline
        3 & $(0,0,z_{\rm hov},0)$        &   Move along $-z$ direction\\ \hline
        4 & $(0,y_{\rm inc},z_{\rm hov},0)$        &   Move along $+y$ direction \\ \hline
        5 & $(0,-y_{\rm inc},z_{\rm hov},0)$       &   Move along $-y$ direction\\ \hline
        6 & $(0,0,z_{\rm hov},0)$        &   Move along $+y$ direction\\ \hline
        7 & $(x_{\rm inc},0,z_{\rm hov},0)$        &   Move along $+x$ direction\\ \hline
        8 & $(-x_{\rm inc},0,z_{\rm hov},0)$       &   Move along $-x$ direction\\ \hline
        9 & $(0,0,z_{\rm hov},0)$        &   Move along $+x$ direction\\ \hline
        10 & $(0,0,z_{\rm hov},\pi/2)$     &   Turn counterclockwise\\ \hline
        11 & $(0,0,z_{\rm hov},\pi)$     &   Turn counterclockwise\\ \hline
        12 & $(0,0,z_{\rm hov}, 0)$      &   Turn clockwise\\ \hline
        13 & $(0,0,0,0)$       &   Land \\ \hline
    \end{tabular}
\end{table}

The gains $\theta_r, \theta_v, \theta_q,$ and $\theta_\omega$ obtained at the end of the learning trajectory are the \textit{autotuned gains} and the autopilot implemented with the autotuned gains is the \textit{autotuned autopilot}.


%
%

\section{RCAC Algorithm}
\label{sec:PID_Algo}
This section briefly reviews the retrospective cost adaptive control (RCAC) algorithm. 
RCAC is described in detail in \cite{rahmanCSM2017} and its extension to digital PID control is given in \cite{rezaPID}.
%

Consider a SISO PID controller with a feedforward term
\begin{align}
    u_k
        &=
            K_{\rmp,k} g(z_{k-1}) +
            K_{\rmi,k} \gamma_{k-1} 
            \nn \\ &\quad +
            K_{\rmd,k} (g(z_{k-1}) - g(z_{k-2})) +
            K_{{\rm ff},k} r_k
            ,
    \label{eq:uk_PID}
\end{align}
where $K_{\rmp,k}, K_{\rmi,k}, K_{\rmd,k}, $ and $K_{{\rm ff},k}$ are time-varying gains to be optimized, 
$z_k$ is an error variable,
$r_k$ is the feedforward signal, 
and, for all $k\ge0$,
\begin{align}
    \gamma_k 
        \isdef
            \sum_{i=0}^{k} g(z_{i}).
\end{align}
Note that the integrator state is computed recursively using $\gamma_k = \gamma_{k-1} + g(z_{k-1})$.
For all $k\ge0$, 
the control law can be written as 
\begin{align}
    u_k 
        =
            \phi_k \theta_k,
    \label{eq:uk_reg}
\end{align}
where 
the regressor $\phi_k$ 
and 
the controller gains $\theta_k$ 
are 
\begin{align}
    \phi_k
        \isdef
            \matl{c}
                g(z_{k-1}) \\
                \gamma_{k-1} \\
                g(z_{k-1}) - g(z_{k-2}) \\
                r_k
            \matr^\rmT, \quad
    \theta_k
        \isdef
            \matl{c}
                K_{\rmp,k} \\
                K_{\rmi,k} \\
                K_{\rmd,k} \\
                K_{{\rm ff},k}
            \matr .
    \label{eq:phi_theta_def}
\end{align}
Note that the P, PI, or PID controllers can be parameterized by appropriately defining $\phi_k$ and $\theta_k.$ 
Various MIMO controller parameterizations are shown in \cite{goel_2020_sparse_para}.

To determine the controller gains $\theta_k$, let $\theta \in \BBR^{l_\theta}$, and consider the \textit{retrospective performance variable} defined by
\begin{align}
    \hat{z}_{k}(\theta)
        \isdef
            g(z_k) + 
            \sigma (\phi_{k-1} \theta - u_{k-1}),
    \label{eq:zhat_def}
\end{align}
where $\sigma \in \BBR.$
The sign of $\sigma$ is the sign of the leading numerator coefficient of the transfer function from $u_k$ to $z_k.$
Furthermore, define the \textit{retrospective cost function} $J_k \colon \BBR^{l_\theta} \to [0,\infty)$ by
\begin{align}
    J_k(\theta) 
        &\isdef
            \sum_{i=0}^k
                \hat{z}_{i}(\theta) ^\rmT 
                R_z
                \hat{z}_{i}(\theta)
                 +
                (\phi_k \theta)^\rmT
                R_u
                (\phi_k \theta)
                \nn \\ &\quad \quad +
                (\theta-\theta_0)^\rmT 
                P_0^{-1}
                (\theta-\theta_0),
    \label{eq:RetCost_def}
\end{align}
where $\theta_0\in\BBR^{l_\theta}$ is the initial vector of PID gains and $P_0\in\BBR^{l_\theta\times l_\theta}$ is positive definite.
%

\begin{proposition}
    Consider \eqref{eq:uk_reg}--\eqref{eq:RetCost_def}, 
    where $\theta_0 \in \BBR^{l_\theta}$ and $P_0 \in \BBR^{l_\theta \times l_\theta}$ is positive definite. 
    Furthermore, for all $k\ge0$, denote the minimizer of $J_k$ given by \eqref{eq:RetCost_def} by
    \begin{align}
        \theta_{k+1}
            \isdef
                \underset{ \theta \in \BBR^n  }{\operatorname{argmin}} \
                J_k({\theta}).
        \label{eq:theta_minimizer_def}
    \end{align}
    Then, for all $k\ge0$, $\theta_{k+1}$ is given by 
    \begin{align}
        \theta_{k+1} 
            &=
                \theta_k  - 
                 \sigma P_{k+1}\phi_{k-1}^\rmT R_z
                 [ z_k + \sigma(\phi_{k-1} \theta_k - u_{k-1}) ]
                 \\ &\quad \quad 
                 - 
                 P_{k+1}\phi_{k}^\rmT
                 R_u \phi_{k} \theta_k 
                 , \label{eq:theta_update}
    \end{align}
    where 
    \begin{align}
        P_{k+1} 
            &=
                P_{k}
                -  
                P_k  
            \Phi_k ^\rmT 
            \left( 
                \bar R ^{-1} +  
                \Phi_k
                P_k
                \Phi_k ^\rmT 
            \right)^{-1}
            \Phi_k,
        \label{eq:P_update_noInverse}
    \end{align}
    and
    \begin{align}
        \Phi_k
            \isdef 
            \matl{c}
                \sigma \phi_{k-1} \\
                \phi_k
            \matr,
        \quad 
        \bar R
            \isdef 
                \matl{cc}
                    R_z & 0 \\
                    0   & R_u
                \matr.
    \end{align}
\end{proposition}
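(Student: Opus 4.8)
The plan is to recognize \eqref{eq:RetCost_def} as a regularized weighted least-squares problem in $\theta$ and to derive \eqref{eq:theta_update}--\eqref{eq:P_update_noInverse} by the standard recursive-least-squares argument. First I would note that, by \eqref{eq:zhat_def}, each retrospective performance variable $\hat z_i(\theta)$ is affine in $\theta$, with $\partial \hat z_i/\partial\theta = \sigma\phi_{i-1}$; hence $J_k$ is quadratic in $\theta$ with half-Hessian $P_0^{-1} + \sum_{i=0}^k \sigma^2\phi_{i-1}^\rmT R_z\phi_{i-1} + \phi_k^\rmT R_u\phi_k$. Because $P_0$ is positive definite and $R_z,R_u$ are positive semidefinite, this matrix is positive definite, so $J_k$ has a unique minimizer; thus the $\operatorname{argmin}$ in \eqref{eq:theta_minimizer_def} is well defined and is characterized by the stationarity (normal) equation $\nabla J_k(\theta_{k+1})=0$.

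Second, I would write the normal equation in information form $P_{k+1}^{-1}\theta_{k+1}=v_{k+1}$, identifying $P_{k+1}^{-1}$ with the half-Hessian above and $v_{k+1}$ with the corresponding linear term. The central observation is that, with the stacked regressor $\Phi_k$ and block weight $\bar R$ of the statement, the one-step increment of the information matrix is the rank-(at most two) update $P_{k+1}^{-1}=P_k^{-1}+\Phi_k^\rmT \bar R\,\Phi_k = P_k^{-1} + \sigma^2\phi_{k-1}^\rmT R_z\phi_{k-1} + \phi_k^\rmT R_u\phi_k$. Applying the Sherman--Morrison--Woodbury identity to invert this then produces exactly the covariance recursion \eqref{eq:P_update_noInverse}; this inversion is the one genuinely nontrivial algebraic manipulation.

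Third, I would obtain \eqref{eq:theta_update} by differencing consecutive normal equations. Since the control penalty is purely quadratic and adds no linear term, the linear part satisfies $v_{k+1}=v_k-\sigma\phi_{k-1}^\rmT R_z\bigl(g(z_k)-\sigma u_{k-1}\bigr)$, while $\theta_k$ obeys the previous normal equation $v_k=P_k^{-1}\theta_k$. Substituting $\theta_{k+1}=P_{k+1}v_{k+1}$, replacing $P_k^{-1}=P_{k+1}^{-1}-\Phi_k^\rmT\bar R\,\Phi_k$, and collecting the two occurrences of $\sigma\phi_{k-1}^\rmT R_z$ yields $\theta_{k+1}=\theta_k-\sigma P_{k+1}\phi_{k-1}^\rmT R_z\bigl[g(z_k)+\sigma(\phi_{k-1}\theta_k-u_{k-1})\bigr]-P_{k+1}\phi_k^\rmT R_u\phi_k\theta_k$, which is \eqref{eq:theta_update} (reading the $z_k$ there as $g(z_k)$, consistent with \eqref{eq:zhat_def}).

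I expect the main obstacle to be the bookkeeping of the two distinct time indices that appear in a single update: the retrospective term enters through $\phi_{k-1}$ via $\partial\hat z_k/\partial\theta=\sigma\phi_{k-1}$, whereas the control-weight term enters through $\phi_k$, and the two are coupled only through the shared covariance $P_{k+1}$. Care is likewise needed with the index convention on the control penalty: for the increment $P_{k+1}^{-1}=P_k^{-1}+\Phi_k^\rmT\bar R\,\Phi_k$ to hold, the $R_u$ weighting must accumulate across steps rather than apply only at the current step, and I would make this reading of \eqref{eq:RetCost_def} explicit so that the stacked-regressor recursion reproduces precisely \eqref{eq:P_update_noInverse} and \eqref{eq:theta_update}.
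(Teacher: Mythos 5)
Your derivation is correct and complete; note, however, that the paper itself offers no proof of this proposition---it simply writes ``See [reference]''---so there is no internal argument to compare against, and your standard regularized recursive-least-squares derivation (quadratic cost, normal equations in information form, rank-two information increment via the stacked regressor $\Phi_k$, Sherman--Morrison--Woodbury inversion, and differencing of consecutive normal equations) is exactly the argument the cited reference supplies. Two details deserve explicit mention. First, as you observe, the recursion $P_{k+1}^{-1}=P_k^{-1}+\Phi_k^\rmT\bar R\,\Phi_k$ requires the control penalty in \eqref{eq:RetCost_def} to be read as $\sum_{i=0}^k(\phi_i\theta)^\rmT R_u(\phi_i\theta)$, accumulated over steps, rather than applied only at step $k$ as the display literally suggests; making this reading explicit is the right call, since under the literal reading the information matrix would gain $\phi_k^\rmT R_u\phi_k$ but also have to shed $\phi_{k-1}^\rmT R_u\phi_{k-1}$ at each step, and neither \eqref{eq:P_update_noInverse} nor \eqref{eq:theta_update} would follow. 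Second, the Woodbury identity actually yields $P_{k+1}=P_k-P_k\Phi_k^\rmT\bigl(\bar R^{-1}+\Phi_k P_k\Phi_k^\rmT\bigr)^{-1}\Phi_k P_k$, with a trailing $P_k$ that is absent from \eqref{eq:P_update_noInverse} as printed (without it $P_{k+1}$ is not even symmetric), so your claim that the identity produces ``exactly'' \eqref{eq:P_update_noInverse} should be softened to flag this as a typographical omission in the statement rather than reproduced verbatim. With those two clarifications, the remaining steps---the affine dependence of $\hat z_i$ on $\theta$ with sensitivity $\sigma\phi_{i-1}$, positive definiteness of the half-Hessian guaranteeing a unique minimizer, the increment $v_{k+1}=v_k-\sigma\phi_{k-1}^\rmT R_z\bigl(g(z_k)-\sigma u_{k-1}\bigr)$, and the substitution $P_k^{-1}=P_{k+1}^{-1}-\Phi_k^\rmT\bar R\,\Phi_k$---combine exactly as you describe to give \eqref{eq:theta_update}, with the $z_k$ appearing there understood as $g(z_k)$, consistent with \eqref{eq:zhat_def}.
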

\begin{proof}
See \cite{AseemRLS}
\end{proof}

Finally, the control is given by
\begin{align}
    u_{k+1} = \phi_{k+1} \theta_{k+1}.
\end{align}

\section{Experimental Results}
\label{sec:flight_tests}
This section describes the experimental results obtained in the simulation environment and in the physical flight tests. 
In both the simulation and the physical flight tests, the autotuner is used to tune the 27 controller gains described in Section \ref{sec:PX4_autopilot} using the learning trajectory described in Section \ref{sec:autotuner}.
Note that in the autotuning mode, all of the gains in both loops are initialized at zero. 
%
%
%
%
%
%
%
%
%
%
The hyperparameters $R_u,P_0,$ and the error-normalization function used in the RCAC algorithm are shown in Table \ref{tab:RCAC_Hyperparameters}. 
Furthermore, $\sigma= R_z=1$ in all four controllers. 
%

\begin{table}[h]
     \caption{\footnotesize RCAC hyperparameters in the adaptive autopilot.}
    \label{tab:RCAC_Hyperparameters}
    \centering
    \renewcommand{\arraystretch}{1.2}
    \begin{tabular}{|c|l|l|l|}
        \hline
        \multicolumn{1}{|c|}{\textbf{Controller}}  & \multicolumn{1}{|c|}{${P_0}$}  & \multicolumn{1}{|c|}{${R_u}$} & \multicolumn{1}{|c|}{$g(z)$}
        \\ \hhline{|=|=|=|=|}
        $G_r$ & $0.01$ & $0.01$ & $z$
        \\ \hline
        $G_v$ & $0.1$ & $0.01$ & ${\rm erf} \left( \tfrac{\sqrt{\pi}}{2} z\right)$
        \\ \hline
        $G_q$ & $1$ & $0.001$ & $z$
        \\ \hline
        $G_\omega$  & $0.0001$ & $0.1$ & ${\rm erf} \left( \tfrac{\sqrt{\pi}}{2} z\right)$
        \\ \hline
    \end{tabular}
\end{table}

\subsection{Simulation Flight Tests}
First, the autotuner is tested in a simulation environment, where the quadcopter is simulated using jMAVSim.
In the learning trajectory,  $z_{\rm hov} = 5$ m, and $x_{\rm inc} = y_{\rm inc} = z_{\rm inc} = 5$ m. 
The guidance law generates the setpoints using $v_{\rm max} = 6 $ m/s, $a_{\rm max} = 1$ m/${\rm s}^2$, $\dot{\psi}_{\rm max} = 2 $ rad/s, $\ddot{\psi}_{\rm max} = $ 0.5 rad/${\rm s}^2.$
%
%

Figure \ref{fig:trainingTraj_setpoints_bigger} shows the response of the quadcopter in the learning trajectory.
%
Note that the response improves as RCAC re-optimizes the controllers continuously in real time.
Figure \ref{fig:trainingTraj_gains} shows the gains adapted by the RCAC algorithm for all four controllers.
The gains at the end of the learning trajectory are saved, are the autotuned gains and constitute the autotuned autopilot.

\begin{figure}[h]
    \centering
    \includegraphics[width=0.7\columnwidth]{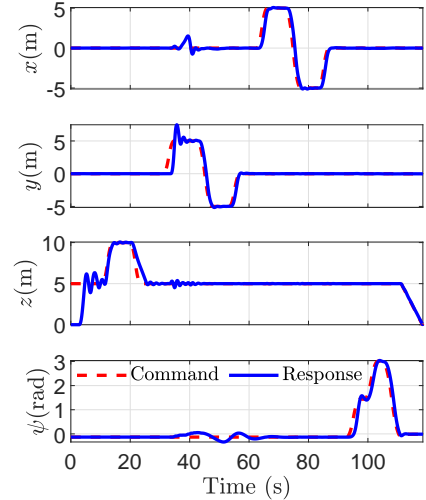}
    \caption{
    \footnotesize
    Simulation results. 
    Response of the quadcopter in the learning trajectory in the jMAVSim simulator.
    }	
    \label{fig:trainingTraj_setpoints_bigger}
\end{figure}
\begin{figure}[h!]
    \centering
    \includegraphics[width=0.7\columnwidth]{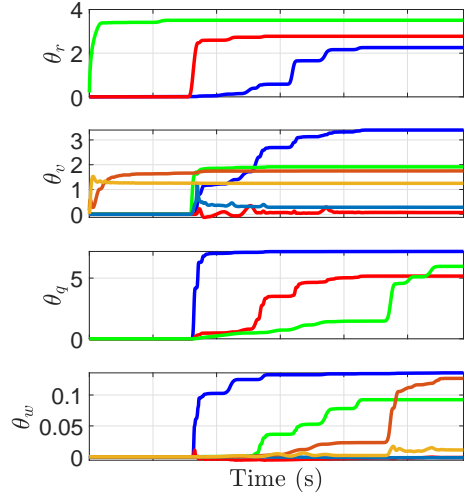}
    \caption{
    Simulation results. Autoilot gains adapted by the RCAC algorithm during the learning trajectory in the jMAVSim simulator. 
    }	
    \label{fig:trainingTraj_gains}
\end{figure}

Next, the quadcopter is commanded to follow a test trajectory, whose waypoints are generated using a second-order Hilbert curve, first with the \textit{default autopilot}, and next with the \textit{autotuned autopilot}. 
The \textit{default autopilot} gains and the actuator constraints in PX4 are specified in the
\verb|mc_pos_control_params.c|,
\verb|mc_att_control_params.c|, and
\verb|mc_rate_cont-| \verb|rol_params.c| files
\footnote{\href{https://github.com/JAParedes/PX4-Autopilot/tree/RCAC\_MC\_AutoTuner}{https://github.com/JAParedes/PX4-Autopilot/tree/RCAC\_MC\_AutoTuner}}.
Note that the results in this paper are based on PX4 version V1.11.3.
%
%
Figure \ref{fig:sitl_hilbertCurve_response} shows the response of the quadcopter obtained with the {autotuned autopilot} and the {default autopilot}. 
Note that the autotuned autopilot outperforms the default autopilot in terms of position tracking error. 

\begin{figure}[h!]
    \centering
    \includegraphics[width=\columnwidth]{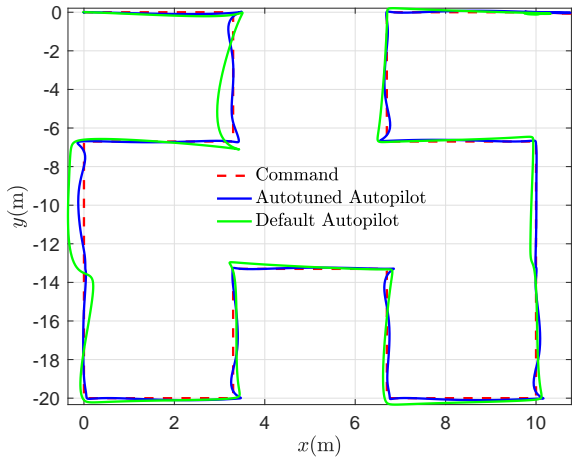}
    \caption{
    Simulation results. 
    Response of the quadcopter in the test trajectory.
    Note that the autotuned autopilot uses the gains tuned during the learning trajectory.
    }
    \label{fig:sitl_hilbertCurve_response}
\end{figure}

In order to quantify and compare the performance of the \textit{autotuned autopilot} with the \textit{default autopilot}, a position-tracking cost variable $J$ defined by
\begin{align}
    J \isdef \frac{1}{T} \sum_{i=0}^N z_i^\rmT z_i,
    \label{eq:cost_compare}
\end{align}
where $z_i\in \BBR^3$ is the position error and  $T$ is the total flight time in seconds, is computed for each test.
%
%
%
Table \ref{tab:comp_table} shows the cost default autopilot cost $J_\rmD$ and the autotuned autopilot cost $J_\rmA.$
%
Note that the {autotuned autopilot} is relatively 38 $\%$ better than the {default autopilot} in simulation.

\begin{table}[h!]
    \caption{Cost variable given by \eqref{eq:cost_compare} for the autotuned and the default autopilots obtained in the test trajectory.  
    }
    \label{tab:comp_table}
    \centering
    \renewcommand{\arraystretch}{1.25}
    \setlength{\tabcolsep}{2pt}
    \resizebox{\columnwidth}{!}{%
    \begin{tabular}{|c|c|c|c|}
    \hline
        \multicolumn{1}{|c|}{\textbf{Test Type}} & \multicolumn{1}{|c|}{$\begin{array}{c}\textbf{Default} \\ \textbf{Autopilot}\\ \textbf{Cost} \ \bm{(J_\rmD)} \end{array}$} & \multicolumn{1}{|c|}{$\begin{array}{c}\textbf{Autotuned} \\ \textbf{Autopilot}\\ \textbf{Cost} \ \bm{(J_\rmA)} \end{array}$} & \multicolumn{1}{|c|}{$\begin{array}{c}\textbf{Relative} \\ \textbf{Improvement}\\ \bm{(J_\rmD - J_\rmA) / J_\rmD}  \end{array}$} \\
    \hhline{|=|=|=|=|}
        $\begin{array}{c} {\rm Simulation} \\ {\rm (jMAVSim)} \end{array}$ & 1.222 & 0.752 & 38.4 \% \\
    \hline 
        $\begin{array}{c} {\rm Flight \ Test} \\ {\rm (M}\text{-}{\rm Air)} \end{array}$ & 0.321 & 0.218 & 32.3 \% \\
    \hline
    \end{tabular}
    }
\end{table}

Next, the sensitivity of the autotuning process to changes in the physical parameters of the quadcopter is investigated.
To do so, the mass of the quadcopter is scaled in the dynamic model simulated by jMAVSim. 
Specifically, the mass of the quadcopter is multiplied by $\alpha \in \{0.5, 0.75, 1, 1.25, 1.5\}.$
For each value of $\alpha,$ the autopilot gains are tuned using the autotuner.
Note that the RCAC hyperparameters are not changed. 
%
%
The performance of the autotuned and the default autpilots is compared for each value of $\alpha$ by flying the test trajectory defined by the second-order Hilbert curve.
%
%
Figure \ref{fig:hilbertCurve_histogram} shows the cost $J$ computed for the position controller in each flight flown with the autotuned and the default autopilot. 
Note that the autotuned autopilot is re-tuned for each value of $\alpha,$ whereas default autopilot is fixed for all values of $\alpha.$


\begin{figure}[h!]
    \centering
    \includegraphics[width=0.9\columnwidth]{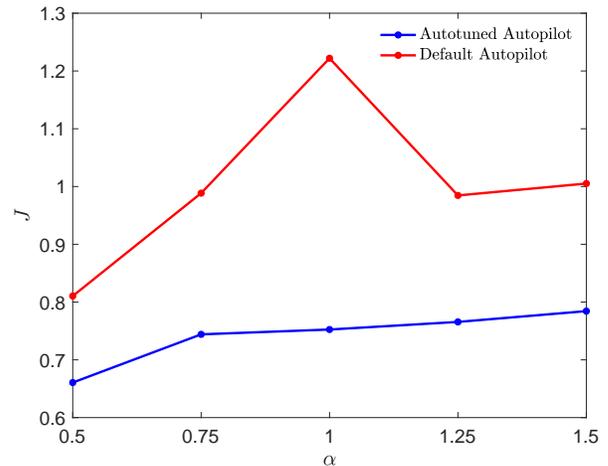}
    \caption{
    Simulation results. 
    Position tracking cost in the test trajectory for the autotuned and the default autopilots obtained.
    Note that the mass of the quadcopter is multiplied by the scalar $\alpha$ in the jMAVSim simulation. 
    }
    \label{fig:hilbertCurve_histogram}
\end{figure}

%

\subsection{Physical Flight Tests}
Next, the autotuner is tested in physical flight tests conducted in the M-Air facility at the University of Michigan, Ann Arbor with the Holybro X500 quadcopter frame.
The M-Air is equipped with a motion capture system that allows high-precision position and attitude measurements.
%
%
%

In the learning trajectory, $z_{\rm hov} = 2$ m, $x_{\rm inc} = y_{\rm inc} = 4$ m, and $z_{\rm inc} = 1$ m.
The guidance law generates the setpoints using 
$v_{\rm max} = 3 $ m/s, 
$a_{\rm max} = 1.2$ m/${\rm s}^2$, 
$\dot{\psi}_{\rm max} = 2 $ rad/s, 
$\ddot{\psi}_{\rm max} = $ 0.5 rad/${\rm s}^2.$
%
Figure \ref{fig:trainingTraj_setpoints_bigger_mair} shows the response of the quadcopter in the learning trajectory.
Figure \ref{fig:trainingTraj_gains_mair} shows the gains adapted by the RCAC algorithm for all four controllers.

%

%

%

Next, the quadcopter is commanded to follow a test trajectory with the {autotuned autopilot} and the {default autopilot} in the M-Air facility. 
The waypoints for the test trajectory are generated using a second-order Hilbert curve. 
Figure \ref{fig:hilbertCurve_response_mair} shows the response of the quadcopter obtained with the {autotuned autopilot} and the {default autopilot} during the physical flight tests.
Table \ref{tab:comp_table} shows the position-tracking cost $J$ computed for the test trajectory flown during the physical flight tests.
Note that the {autotuned autopilot} is relatively 32 $\%$ better than the {default autopilot}.

\begin{figure}[h!]
    \centering
    \includegraphics[width=0.7\columnwidth]{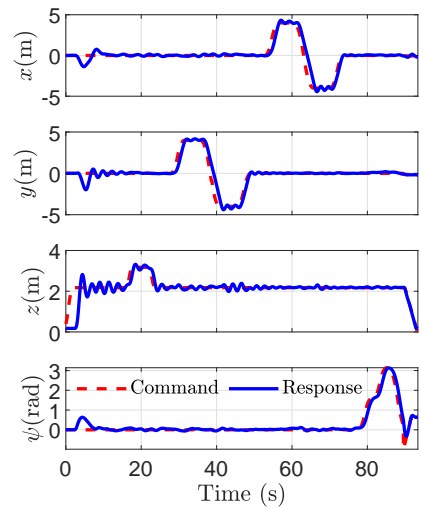}
    \caption{
    Flight test results. Response of the quadcopter in the learning trajectory at the M-Air facility.
    }
    \label{fig:trainingTraj_setpoints_bigger_mair}
\end{figure}
\begin{figure}[h!]
    \centering
    \includegraphics[width=0.7\columnwidth]{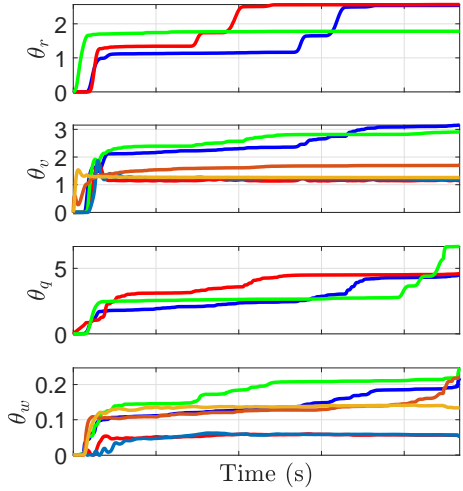}
    \caption{
    Flight test results. Autopilot gains adapted by the RCAC algorithm during the learning trajectory at the M-Air facility.
    }
    \label{fig:trainingTraj_gains_mair}
\end{figure}

\begin{figure}[h!]
    \centering
    \includegraphics[width=\columnwidth]{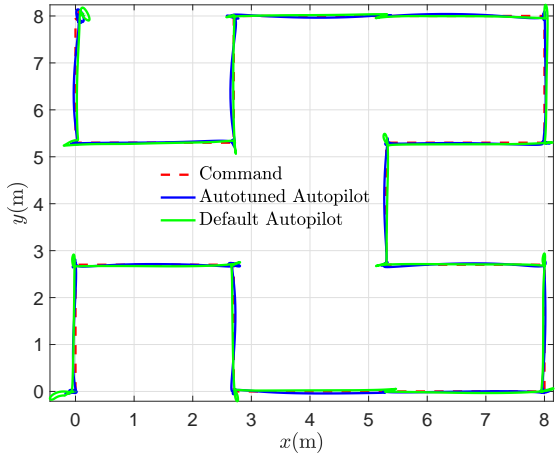}
    \caption{ Flight test results. 
    Response of the quadcopter in the test trajectory.
    Note that the autotuned autopilot uses the gains tuned during the learning trajectory.
    }
    \label{fig:hilbertCurve_response_mair}
\end{figure}


\section{Conclusions and Future Work}
\label{sec:conclusions}

This paper presented an adaptive autotuner that can automatically tune a multicopter autopilot without using any modeling information about the multicopter.  
The adaptive autotuner, implemented in the PX4 flight stack,  consists of adaptive PID controllers that are updated by the retrospective cost adaptive control algorithm by flying a single learning trajectory.

The adaptive autotuner was validated in a simulation environment, and its performance was compared against the default autopilot by scaling the mass of the quadcopter. 
The adaptive autotuner outperformed the default autotuner in all cases without any change in the hyperparameters of the adaptive algorithm. 
Next, using the same hyperparameters used for the simulation, the adaptive autotuner was used to tune the autopilot for the X500 Holybro quadcopter in the M-Air facility, where the adaptive autotuner outperformed the default autotuner in the test trajectory.

Future work will focus on using the adaptive autopilot to improve the performance of the default autopilot flying a multicopter with an unknown suspended payload and under various actuator failure conditions.

\begin{appendices} 
\section{}\label{app:guid_law}


Let $r_\rmc \in \BBR^3$ denote the current position, $r_\rmd \in \BBR^3$ denote the desired position, and $d$ denote the distance between the current and the desired position, that is,  $d \isdef ||r_\rmd - r_\rmc ||_2.$
Let $a_{\rm max}>0$ and $v_{\rm max}>0.$
The desired trajectory consists of an acceleration phase, a cruise phase, and a deceleration phase along the straight line between $r_\rmc$ and $r_\rmd.$ 

Let the setpoints $r(t) \in \BBR^3$ be given by the guidance law
\begin{align}
    r(t) 
        =
            r_\rmc + (r_\rmd-r_\rmc) 
            \frac{q(t)}{d},
    \label{eq:guidance_law}
\end{align}
where $q(t)$ is calculated as shown below. 
Using the fact that $r(0)=r_\rmc$ and $r(T_\rmf)=r_\rmd,$ where $T_\rmf$ is the total flight time, it follows that 
$q(0) = 0$ and $q(T_\rmf)=d.$
Note that $q(t)$ is the distance between $r(t)$ and $r_\rmc.$
%
%


Let $T_1$ denote the time to reach the speed $v_{\rm max}$ at maximum acceleration $a_{\rm max}.$
Thus, $T_1= \dfrac{v_{\rm max}}{a_{\rm max}}$ and 
$     q(T_1)
        =
            \dfrac{1}{2} a_{\rm max} T_1^2
        =
            \dfrac{1}{2} \dfrac{v_{\rm max}^2}{a_{\rm max}}.
$


First, consider the case where $d \geq 2 q(T_1).$ 
In this case, the quadcopter cruises at $v(t)=v_{\rm max}$ for $t\in [T_1, T_2],$ where $t=T_2$ denotes the time at which deceleration phase starts, and is calculated as shown below.  
Note that 
\begin{align}
    q(T_2) 
        &=    
            \frac{1}{2} \frac{v_{\rm max}^2}{a_{\rm max}} + v_{\rm max} (T_2 - T_1).
\end{align}
%
%
Since the quadcopter takes $T_1$ seconds to decelerate from $v_{\rm max}$ to $0$ speed at constant deceleration $a_{\rm max},$ 
the total flight time is $T_\rmf = T_1+T_2$ and 
\begin{align}
    q(T_3) = \frac{v_{\rm max}^2}{a_{\rm max}} + v_{\rm max} (T_2 - T_1) = d.
    \label{eq:r(T_3)}
\end{align}
It follows from \eqref{eq:r(T_3)} that 
$T_2 
        =
            \dfrac{d}{v_{\rm max}} $
and thus, the total flight time 
$T_\rmf
        = 
            \dfrac{v_{\rm max}}{a_{\rm max}} + 
            \dfrac{d}{v_{\rm max}}.$
The distance $q(t)$ is thus given by
\begin{align}
    q(t)
        =
    \begin{cases}
        \frac{1}{2} a_{\rm max} t^2, & t\in [0, T_1), \\
        \frac{1}{2} a_{\rm max} T_1^2 + 
        v_{\rm max} (t - T_1), & t\in [T_1, T_2 ), \\
        \frac{1}{2} a_{\rm max} T_1^2 + 
        v_{\rm max} (T_2 - T_1) 
        \\ \quad  
        + 
        v_{\rm max} (t- T_2) - 
        \frac{1}{2} a_{\rm max } (t-T_2)^2
        , & t\in [T_2, T_\rmf ).
    \end{cases}
\end{align}

Next, consider the case where $d < 2 q(T_1).$
In this case, the quadcopter velocity does not reach $v_{\rm max}$, that is, there is no cruise phase. 
%
Let $\bar T_1$ denote the time instant at which maximum velocity $\bar v_{\rm max}$ is reached. 
Thus, 
$\bar T_1 =   \dfrac{\bar v_{\rm max}}{a_{\rm max}}$
and
$q(\bar T_1)
        =
            \dfrac{1}{2} \dfrac{\bar v_{\rm max}^2}{a_{\rm max}}.$
Assuming that the quadcopter decelerates at the same rate to rest, it follows that $q(\bar T_1)=d/2,$ and thus 
$\bar v_{\rm max}
        =
            \sqrt{d a_{\rm max}},$
$\bar T_1
        =
            \sqrt{ \dfrac{d}{ a_{\rm max}}}, 
$and $
    T_\rmf = 2 \sqrt{ \dfrac{d}{ a_{\rm max}}}.$
The distance $q(t)$ is thus given by
\begin{align}
    q(t)
        =
    \begin{cases}
            \frac{1}{2} a_{\rm max} t^2, & t\in [0, \bar T_1), 
        \\
            \frac{1}{2} a_{\rm max} \bar T_1^2 + 
            \bar v_{\rm max} (t-\bar T_1) 
            \\ \quad 
            - 
            \frac{1}{2} a_{\rm max } (t-\bar T_1)^2
            , & t\in [\bar T_1, 2 T_\rmf ).
    \end{cases}
\end{align}

\end{appendices}


\bibliographystyle{IEEEtran}
\bibliography{main_quad_tuner_arxiv}

\end{document}